\documentclass{llncs}

\usepackage{amsmath,amssymb}
\usepackage{listings}
\usepackage{hyperref}
\usepackage{llncsdoc}
\usepackage{graphicx}
\usepackage{float}
\usepackage[ruled]{algorithm2e}
\usepackage{algpseudocode}
\usepackage{soul}
\usepackage{color}
\usepackage{tikz}
\usepackage{multirow}

\newtheorem{thm}{Theorem}

\newtheorem{lem}{Lemma}
\newtheorem{defi}{Definition}

\begin{document}

\title{Orbit Problem Revisited}

\author{Taolue Chen$^{1}$, Xiaoming Sun$^{2}$ and Nengkun Yu$^{3,4}$}
\institute{ $^{1}$Department of Computer Science, University of Oxford, UK \\
$^{2}$Institute of Computing Technology, Chinese Academy of Sciences,
Beijing, China \\$^{3}$State Key Laboratory of Intelligent Technology and Systems, Tsinghua National Laboratory
for Information Science and Technology, Department of Computer Science and Technology,
Tsinghua University, Beijing, China \\
$^{4}$Center for Quantum Computation and Intelligent Systems (QCIS), \\ Faculty of
Engineering and Information Technology, \\ University of Technology,
Sydney, NSW 2007, Australia}

\maketitle

\begin{abstract}
In this letter, we revisit the {\em orbit problem}, which was studied in \cite{HAR69,SHA79,KL86}. In \cite{KL86}, Kannan and Lipton proved that this problem is decidable in polynomial time. In this paper, we study the {\em approximate orbit problem}, and show that this problem is decidable except for one case.

\end{abstract}

\section{Introduction}
A sequential machine \cite{ELS59,GIL64} is called \emph{linear} if its ``next state" function and ``output" function are linear transformations.
Such machines have many important applications, for instance, computer control circuitry, automatic error-correction circuits, digital communication systems etc. It is very natural to study the accessibility problem for linear sequential machines, i.e., decide
whether there is an input $x$ such that on $x$ the machine, starting in a given state $q$, goes to a given state
$q'$. In \cite{HAR69}, Harrison proved that this problem can be reduced to a standard linear algebra problem, called the ``orbit problem".

\begin{defi}[Orbit problem] Given an $n\times n$ rational matrix $A$, a rational initial vector $\vec{x}$ and a rational target vector $\vec{y}$, decide whether there is some $k\in \mathbb{N}$ such that $A^k \vec{x}=\vec{y}$.
\end{defi}

Shank showed that this problem is decidable for $n=2$ \cite{SHA79}, while the general decidability result was obtained by Kannan and Lipton \cite{KL86}. They related this problem to another linear algebra problem, the \emph{matrix power problem}, which is the polynomial analog of the well-known discrete logarithm
problem for fields of prime characteristic \cite{ADE79}.

\begin{defi}[Matrix power problem]
Given two $n\times n$ rational matrices $A$ and $B$, decide whether there exists some $k\in \mathbb{N}$ such that $A^k=B$.
\end{defi}

Kannan and Lipton show that the ``matrix power problem" for general $n$ is decidable in polynomial time, so is the ``orbit problem". A natural generalization of the ``matrix power problem" is the $ABC$ problem studied by Cai $et.~al.$ \cite{CLZ94,CAI94}. They proved that it is decidable in polynomial time whether there exists natural numbers $k$ and $j$ such that $A^kB^j=C$ for given rational matrices $A,B,C$ with $AB=BA$.

\medskip

Some generalized versions of the orbit problem have been addressed in literature.
In this paper, we study the ``approximate orbit problem" which is formulated as follows.

\begin{defi}[Approximate orbit problem] Given a rational $n\times n$ matrix $A$, a rational initial vector $\vec{x}$, a rational target vector $\vec{y}$ and a rational number $\delta>0$, decide whether there exists some $k\in \mathbb{N}$ such that
\[||A^k \vec{x}-\vec{y}||<\delta,\]
where $||\cdot||$ is some given norm of the linear space.
\end{defi}

Throughout the paper, we use $V$ to denote the linear vector space; $e_i$ denotes the standard unit basis of $V$ and $M_n$ denotes all linear mappings on $V$. It is convenient to define a linear bijection
\[vec : M_n\mapsto V\otimes V
\]
by $vec(E_{ij})=e_i\otimes e_j$, where $E_{ij}$ is the square matrix with all zeros except a 1 at location $(i,j)$.

A square matrix $U$ is called unitary if $U^{\dag}U=UU^{\dag}=I$ with $I$ being the identity matrix. It is well known that any unitary $U$ has the spectral decomposition, $i.e.$, $U=V^{\dag}DV$ for some unitary $V$ and diagonal unitary $D$, $V^{\dag}$ stands for the complex conjugate of $V$.

$\mathcal{S}(A,\vec{x}):=\{\vec{x},A\vec{x},A^2\vec{x},\cdots,A^m\vec{x},\cdots\}$ is used to denote the orbit defined by a matrix $A$ and vector $\vec{x}$.

An algebraic number is a root of a polynomial in $\mathbb{Q}[x]$, the ring of polynomials
in the variable $x$ over the rationals.
We will associate an algebraic number with an irreducible polynomial with rational coefficients and a sufficiently good rational approximation, which uniquely identify the particular root of the polynomial. A root of unity is any complex number that equals 1 when raised to some integer power $n$.

\paragraph{Structure.} In order to study the ``approximate orbit problem". Firstly, we gives a fully characterization of the set of the limit points of the given orbit. After that, we provide a method to solve the ``approximate orbit problem" when $\epsilon$ does not equal to the distance between the target vector and the set of the limit points, which could be well approximated using epsilon net. We conclude the paper in section 4.

\section{Approximate Orbit Problem}
In this section, we study one generalization of the orbit problem, the ``approximate orbit problem", that whether the given orbit could arrived in some given open ball. We show that this problem is decidable except for one case.

Assume an orbit in a linear space with norm $||\cdot||$,  \[\mathcal{S}(A,\vec{x})=\{\vec{x},A\vec{x},A^2\vec{x},\cdots A^m \vec{x}, \cdots\},\]
a target vector $\vec{y}$ and a radius $\delta$. The problem asks whether there exists some $k\in \mathbb{N}$ such that $A^k\vec{x}\in B(\vec{y},\delta)$ with $B(\vec{x},\delta)=\{\vec{z}:||\vec{x}-\vec{z}||<\delta\}$ being the open ball of radius $\delta$ and centered at $\vec{y}$.

We introduce the concept of  limit points of $\mathcal{S}(A,\vec{x})$.

\begin{defi}
 $\vec{p}$ is a \emph{limit point} if for any $\epsilon>0$, there exist at least two different integers $k_1\neq k_2\in\mathbb{N}$ such that, the open ball $B(\vec{p},\epsilon)$ contains $A^{k_1}\vec{x}$ and $A^{k_2}\vec{x}$, $i.e.$,
\[||\vec{p}-A^{k_1}\vec{x}||,||\vec{p}-A^{k_2}\vec{x}||<\epsilon.\]
We write $S_L$ for the set of limit points.
\end{defi}
Assume that $A$, $\vec{x}$,$\vec{y}$ and $\delta$ are all rational. %We attempt to study the decidability of this problem.
Our analysis precedes by three steps:

\begin{itemize}
  \item Step 1, we give a parameterized characterization of $S_L$.

  \item Step 2, we check whether $S_L$ is empty. It is not hard to see that if $S_L=\varnothing$, the ``approximate orbit problem" is easy to solve.
Otherwise, more precise analysis is needed: we approximate the distance between $\vec{y}$ and $S_L$ as well as possible, where the distance defined as following
\[D(\vec{y}, S_L)=\inf_{\vec{q}\in S_L} D(\vec{y},\vec{q})=\inf_{\vec{q}\in S_L} ||\vec{y}-\vec{q}||.\]

  \item Step 3: %we discuss the relation of $\delta$ and $D(\vec{y},S_L)$. We do not know how to deal with the case that $\delta=D(\vec{y},S_L)$ since we do not know how to evaluate $D(\vec{y},S_L)$ precisely. Thus, we make the assumption that $\delta\neq D(\vec{y},S_L)$. We first determine which case it is of the following by approximate $D(\vec{y},S_L)$ as well.
      Assuming that $\delta\neq D(\vec{y},S_L)$. We separate the following two cases %first determine which case it is of the following
      by approximating $D(\vec{y},S_L)$.
      \begin{itemize}
        \item $\delta>D(\vec{y},S_L)$. One can conclude that there exists some $k\in \mathbb{N}$ such that
\[D(A^k \vec{x},\vec{y})<\delta,\]
in fact, there are infinitely many such $k$.

\item$\delta<D(\vec{y},S_L)$. Namely, any limit point of $\mathcal{S}$ does not lie in the open ball of radius $\delta$ centered at $\vec{y}$. To handle this case, we need analysis the distance between the limit point of $S$ and non-limit point of $\mathcal{S}$. Then, we bound the possible $k$ such that $A^k \vec{x}\in B(\vec{y},\delta)$.
      \end{itemize}

\end{itemize}

\subsection{Characterization of $S_L$}
In this subsection, we give the full parameterized characterization of $S_L$, the set of limit points of the orbit $\mathcal{S}$.

Before studying the general case, we note the following well known lemma describes the closure of the matrix group generated by one diagonal unitary, where the closure is defined according to the matrix norm induced by the given vector norm $||
\cdot||$.
\begin{lem}\label{lem:uniatry}
$U=diag(u_1,u_2,\cdots,u_n)$ is a diagonal unitary with $\frac{u_i}{u_j}$ and $u_i$ being no root of unity for any $i\neq j$. Then, the closure of $\{U,U^2,\cdots,U^m,\cdots\}$ is the set of all diagonal unitaries.
\end{lem}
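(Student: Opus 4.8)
The plan is to transport the statement onto the $n$-torus via the phase map and then appeal to Kronecker's approximation theorem (equivalently, Weyl's equidistribution theorem). Write $u_j=e^{2\pi i\theta_j}$ with $\theta_j\in[0,1)$, so that $U^m=diag(e^{2\pi i m\theta_1},\dots,e^{2\pi i m\theta_n})$. Since $V$ is finite dimensional, all norms on $M_n$ are equivalent, hence the particular induced matrix norm is irrelevant and one may argue with the entrywise sup norm; the assignment $diag(e^{2\pi i\phi_1},\dots,e^{2\pi i\phi_n})\mapsto(\phi_1,\dots,\phi_n)+\mathbb{Z}^n$ is then a homeomorphism from the group of all diagonal unitaries onto $\mathbb{T}^n=(\mathbb{R}/\mathbb{Z})^n$. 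Under this homeomorphism the closure of $\{U^m:m\ge 1\}$ corresponds to the closure $\overline{H}$ of the cyclic subgroup $H=\{m\theta+\mathbb{Z}^n:m\in\mathbb{Z}\}$ generated by $\theta=(\theta_1,\dots,\theta_n)$, and the lemma is equivalent to the assertion $\overline{H}=\mathbb{T}^n$.

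Next I would use that $\overline{H}$ is a closed subgroup of $\mathbb{T}^n$. By Kronecker's theorem (equivalently, the structure theory of closed subgroups of a torus), $\overline{H}$ is exactly the common kernel of the characters $t\mapsto e^{2\pi i\langle a,t\rangle}$ ranging over the annihilator lattice
\[\Lambda=\{a\in\mathbb{Z}^n:\langle a,\theta\rangle\in\mathbb{Z}\},\]
and, in particular, $\overline{H}$ is dense (indeed $\{m\theta\}$ equidistributes) precisely when $\Lambda=\{0\}$, while otherwise $\overline{H}$ lies in the proper subtorus cut out by any nonzero $a\in\Lambda$. Since $\langle a,\theta\rangle\in\mathbb{Z}$ amounts to $\prod_j u_j^{a_j}=1$, the lemma now reduces to one number-theoretic statement: no product $\prod_j u_j^{a_j}$ with $a\in\mathbb{Z}^n\setminus\{0\}$ equals a root of unity, i.e. $\Lambda=\{0\}$. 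This dichotomy is classical and I would simply cite it.

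It remains to derive $\Lambda=\{0\}$ from the hypotheses, and this is the heart of the matter. The assumption that $u_j$ is not a root of unity says $\Lambda$ contains no nonzero vector $a\,e_j$, and the assumption that $u_i/u_j$ is not a root of unity ($i\neq j$) says $\Lambda$ contains no nonzero $a(e_i-e_j)$. The natural attempt is, starting from a hypothetical $0\neq a\in\Lambda$, to take integer combinations that stay in $\Lambda$ and peel off coordinates until one reaches a nonzero multiple of some $e_j$ or of some $e_i-e_j$, contradicting the hypotheses. The honest point to flag is that exactly this step carries all the content and is delicate: controlling relations supported on one or two coordinates does not by itself preclude a relation with three or more nonzero entries meeting none of those lines (for instance a vector like $(1,2,3)$ in dimension three), so the argument must either bring in extra structure of the $u_j$ specific to the intended application, or read the hypothesis in its strong form --- that \emph{every} nontrivial monomial $\prod_j u_j^{a_j}$ is a non-root-of-unity --- in which case $\Lambda=\{0\}$, and hence $\overline{H}=\mathbb{T}^n$, is immediate.

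In summary, modulo the Kronecker/Weyl dichotomy the proof is entirely reduced to showing that the annihilator lattice $\Lambda$ is trivial and then pulling the conclusion back through the phase homeomorphism. The torus translation, the norm-equivalence remark, and the density-versus-proper-subtorus dichotomy are routine; establishing $\Lambda=\{0\}$ from the stated root-of-unity conditions is the main obstacle.
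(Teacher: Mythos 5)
Your route --- the phase map onto $\mathbb{T}^n$, the reduction of the claim to density of the cyclic subgroup generated by $\theta=(\theta_1,\dots,\theta_n)$, and the Kronecker/Weyl dichotomy expressed through the annihilator lattice $\Lambda=\{a\in\mathbb{Z}^n:\langle a,\theta\rangle\in\mathbb{Z}\}$ --- is the standard and correct way to attack this statement. There is nothing in the paper to measure it against: the authors give no proof at all, asserting only that the lemma is ``well known''. So the only question is whether the one step you yourself flag as the heart of the matter, namely deriving $\Lambda=\{0\}$ from the stated root-of-unity hypotheses, can be carried out.

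It cannot, and your caution is fully justified: the hypotheses do not imply $\Lambda=\{0\}$, and the lemma as stated is in fact false for every $n\ge 2$. Concretely, take $\alpha$ irrational and $u_1=e^{2\pi i\alpha}$, $u_2=e^{4\pi i\alpha}$. Then neither $u_1$, nor $u_2$, nor $u_1/u_2=e^{-2\pi i\alpha}$ is a root of unity, so the hypotheses hold; yet $u_1^2u_2^{-1}=1$, every power $U^m$ satisfies $z_2=z_1^2$ entrywise, and the closure of $\{U^m:m\ge 1\}$ is the one-dimensional subgroup $\{\mathrm{diag}(z,z^2):|z|=1\}$ rather than the full two-torus of diagonal unitaries. (For $n=3$ one can take $u_3=u_1u_2$ with $1,\alpha,\beta$ rationally independent, exactly the kind of relation with three nonzero exponents your proposal anticipates.) The stated hypotheses only forbid annihilator vectors on the lines $\mathbb{Z}e_j$ and $\mathbb{Z}(e_i-e_j)$, which is strictly weaker than $\Lambda=\{0\}$ once $n\ge 2$, so the missing step is not a technical delicacy but an impossibility: the lemma needs the stronger hypothesis that no nontrivial monomial $\prod_j u_j^{a_j}$ with $a\in\mathbb{Z}^n\setminus\{0\}$ equals $1$ (equivalently, that $1,\theta_1,\dots,\theta_n$ are linearly independent over $\mathbb{Q}$). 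Under that corrected hypothesis your argument closes immediately and is the classical Kronecker-density proof; as written, the defect lies in the lemma's statement rather than in your method, and it propagates to the paper's later use of the lemma, where one free phase per equivalence class of eigenvalues is claimed on the same insufficient grounds.
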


\begin{remark}
Apply this lemma on orbit $\mathcal{S}=(A,\vec{x})$ with $A$ being such diagonal unitary, we know that $S_L=\{V\vec{x}:V~\mathrm{ranges~~over~~all~~diagonal~~unitaries}~\}$.
\end{remark}

\begin{remark}
It is direct to see that any unitary $U$ lies in the set of the limit points of the group generated by itself. By employing the spectral decomposition of $U$, we only need to deal with the diagonal case, then divide all the eigenvalues into disjoint set, and apply the above lemma.
\end{remark}

In general, we have the following result:
\begin{lem}\label{lem:limitpoint}
For given $\mathcal{S}(A,\vec{x})=\{\vec{x},A\vec{x},A^2\vec{x},\cdots A^m \vec{x}, \cdots\}$, we have an algorithm which classifies $S_L$ into the following cases,
 \begin{itemize}
\item Case A: $S_L$ is empty, and computes a integer $N$ and a constant $c>0$ such that and for $m>N$, one have $||A^m \vec{x}||>c m$.

\item Case B: $S_L$ is not empty, and computes a non-singular matrix $Q$ and a diagonal matrix $C$ whose diagonal entries are zero or with absolute value 1 and $\vec{v}$ such that $S_L$ equals to the closure of $Q\{C\vec{v},\cdots,C^m\vec{v},\cdots\}$.
 \end{itemize}
\end{lem}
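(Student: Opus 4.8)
The plan is to reduce the general case to the unitary case handled by Lemma~\ref{lem:uniatry} via the Jordan (or, over $\mathbb{C}$, the Schur) decomposition of $A$, together with a change of basis that separates the spectrum of $A$ into three regimes according to the modulus of the eigenvalues: $|\lambda|<1$, $|\lambda|=1$, and $|\lambda|>1$. Concretely, I would first compute the rational canonical form of $A$, so that $A = Q\,J\,Q^{-1}$ with $Q$ non-singular and $J$ block-diagonal; all entries of $Q$, $Q^{-1}$ and the eigenvalues of $A$ are algebraic numbers, represented (as the excerpt stipulates) by an irreducible polynomial plus a rational approximation, and all the arithmetic and comparisons below — in particular deciding $|\lambda| <, =, > 1$ and deciding whether $\lambda$ or a ratio $\lambda_i/\lambda_j$ is a root of unity — are effective on algebraic numbers. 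Writing $\vec{w} = Q^{-1}\vec{x}$ and decomposing $\vec{w}$ according to the invariant subspaces $V_{<}, V_{=}, V_{>}$ of $J$, the orbit becomes $A^m\vec{x} = Q\,J^m\vec{w}$, so $S_L$ is the $Q$-image of the set of limit points of $\{J^m\vec{w}\}$, which is why $Q$ appears in the statement.

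The key structural observation I would establish is a dichotomy. \emph{Case A} occurs precisely when the component $\vec{w}_{>}$ of $\vec{w}$ in $V_{>}$ is non-zero, or when $\vec{w}_{>}=0$ but some Jordan block with eigenvalue of modulus exactly $1$ has size $\geq 2$ and the relevant coordinate of $\vec{w}$ is non-zero (a ``polynomial-times-unitary'' growth), or similarly a non-trivial nilpotent contribution on $|\lambda|=1$: in all these sub-cases $\|A^m\vec{x}\| \to \infty$, in fact at least linearly, and one extracts the integer $N$ and constant $c>0$ by an explicit lower bound on the dominant term (a geometric or linear term beating the contributions from $V_{<}$, which decay exponentially, and from the bounded part). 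If no such divergent component is present, then $\vec{w}_{>}=0$, every Jordan block touching the unit circle that meets $\mathrm{supp}(\vec{w})$ is $1\times 1$, and the contribution of $V_{<}$ tends to $0$; hence $A^m\vec{x}$ stays bounded and $S_L$ is non-empty — this is \emph{Case B}. In that case the asymptotics are governed entirely by the diagonalizable, modulus-$1$ part: there is a unitary-diagonalizable matrix whose eigenvalues $\mu_1,\dots,\mu_r$ all lie on the unit circle, and writing $D=\mathrm{diag}(\mu_1,\dots,\mu_r)$ one has $A^m\vec{x} \to$ (limit points of) $Q'\{D^m \vec{v}\}$ for an appropriate $Q'$ absorbed into $Q$ and an appropriate $\vec{v}$. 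Setting $C$ to be the diagonal matrix with these unit-modulus entries (and zeros for the coordinates on which $\vec{w}$ vanishes or which live in $V_{<}$, whose contribution vanishes in the limit), we obtain exactly the claimed form $S_L = \overline{Q\{C\vec{v}, C^2\vec{v}, \dots\}}$.

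To compute the closure $\overline{\{C^m\vec{v}\}}$ one invokes Lemma~\ref{lem:uniatry}: partition the unit-modulus eigenvalues into equivalence classes under ``$\mu_i/\mu_j$ is a root of unity'', pull out the common root-of-unity factors (which contribute a finite cyclic part), and apply the Kronecker/Weyl equidistribution statement of Lemma~\ref{lem:uniatry} on each class to see that the closure is a finite union of images of products of tori — but the lemma already packages this, so at the level of \emph{this} statement it suffices to output $Q$, $C$ and $\vec{v}$ and cite Lemma~\ref{lem:uniatry}. The main obstacle, and the place where care is genuinely needed, is the boundary analysis on $|\lambda|=1$: one must argue rigorously that a Jordan block of size $\geq 2$ on the unit circle forces divergence (so it lands in Case A and never contributes a finite limit set), and conversely that $1\times 1$ blocks on the unit circle, together with the vanishing-in-the-limit contribution of the contracting part, are exactly captured by a diagonal unitary $C$ acting on a single vector $\vec{v}$; getting the bookkeeping of $Q$, $C$, $\vec{v}$ consistent across this splitting, and confirming that each required test is decidable for algebraic numbers, is the crux.
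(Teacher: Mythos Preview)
Your approach via the Jordan decomposition, splitting the spectrum by modulus, and reducing the bounded part to a diagonal unit-modulus matrix $C$ is exactly the paper's. One refinement is needed in your dichotomy: you put a Jordan block of size $\geq 2$ with $|\lambda|=1$ into Case~A whenever it meets $\mathrm{supp}(\vec{w})$, but such a block remains bounded (hence belongs to Case~B) when $\vec{w}$ hits only its eigenvector coordinate; the paper's Case~iv) makes precisely this distinction, and its construction of $C$ correspondingly retains the $(1,1)$ entry of such a block while zeroing the rest, rather than requiring every active unit-circle block to be $1\times 1$. With that edge case incorporated, your bookkeeping and the paper's coincide.
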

\begin{proof} Suppose the Jordan decomposition of $A$ is given by $A=PJP^{-1}$ with
$P$ being a nonsingular matrix, and $J$ being the Jordan normal
form of $A$:
\[J= diag (J_{k_1}(\lambda_1),
J_{k_2}(\lambda_2),\cdot\cdot\cdot, J_{k_l}(\lambda_l)),\] where
$J_{k_s}(\lambda_s)$ is a $k_s\times k_s$-Jordan block with corresponding eigenvalue $\lambda_s$ $(1\leq s\leq l)$.

Let $\vec{z}=P^{-1}\vec{x}$. We have that $\mathcal{S}(A,\vec{x})=\{P \vec{z}, P J\vec{z},\cdots,PJ^m\vec{z},\cdots\}$. As $P$ is non-singular, $S_L$ can be obtained by applying $P$ on the limit points of \[\mathcal{S}(J,\vec{z})=\{\vec{z}, J\vec{z},\cdots,J^m\vec{z},\cdots\}.\]

To consider the limit points of $\mathcal{S}'$, we first deal with the following orbit with $J_{t}(\lambda)$ being the $t\times t$-Jordan block of
eigenvalue $\lambda$ and $\vec{r}\neq 0$.
\[\mathcal{S}(J_{t}(\lambda),\vec{r})=\{\vec{r},J_{t}(\lambda)\vec{r},\cdots,J_{t}^m(\lambda)\vec{r},\cdots\}.\]
There are four cases,
 \begin{itemize}
\item Case i). $|\lambda|>1$. Then $\mathcal{S}(J_{t}(\lambda)\vec{r})$ has no finite limit point, as there exist some constant $a>1$ and integer $N$ such that for any $m>N$,
\[||J_{t}^m(\lambda)\vec{r}||>a^m\]
%Actually, the vector series goes to infinity according to the increment of $m$.

\item Case ii). $|\lambda|<1$. Then $\mathcal{S}(J_{t}(\lambda),\vec{r})$ has only one limit points $\vec{0}$ as there exist some constant $0<b<1$ and integer $N'$ such that for any $m>N'$,
    \[ \lim_{m\rightarrow \infty }||J_{t}^m(\lambda)\vec{r}||<b^m.
    \]
%Thus, the vector series goes to $\vec{0}$ according to the increment of $m$.

\item Case iii). $|\lambda|=1$ and $t=1$. Then $J_{t}^m(\lambda)\vec{r}=\lambda^m r_{1}$.

\item Case iv). $|\lambda|=1$ and $t>1$. Then if $\vec{r}$ has at least two nonzero entries or the only nonzero entry does not lie in the last entry, the points of the set will goes to infinity. there exist some constant $c>1$ and integer $N$ such that for any $m>N$,
     \[||J_{t}^m(\lambda)\vec{r}||>cm.\]
    Otherwise, for any $m$, the only non-zero entry of $\vec{r}$ is the first entry, $r_1$. Then the first entry of the $J_{t}^m(\lambda)\vec{r}$ is $\lambda^m r_{1}$.
 \end{itemize}
Applying the above result, we are able to check whether $\mathcal{S}(J,\vec{z})$ is empty or not.

If it is $S_L$ is empty, then there are two cases:
 \begin{itemize}
 \item Case 1. According to case i), we know that there is some $a>1$ and a integer $N_1$ such that $$||J^m\vec{z}||> a^m$$ is true for any $m>N_1$;
 \item Case 2. According to case iv), we know that there is some $c>0$ and a integer $N_2$ such that $$||J^m\vec{z}||> c m$$ is true for any $m>N_2$.
  \end{itemize}
 Therefore, for $m>\max\{N_1,N_2\}$, we have $$||J^m\vec{z}||> c m ~\Rightarrow~||PJ^m\vec{z}||> c'm~\Rightarrow~||A^m\vec{x}||> c' m$$ for $m>\max\{N_1,N_2\}$, where $c'=\frac{c}{||P^{-1}||}$ as $||P^{-1}||$ being the induced norm of $P^{-1}$.

\item Case B: $S_L$ is not empty, invoking observation ii), iii) and iv), we know that $S_L$ equals to the set of limit points of
\[\{P\vec{z}, PC\vec{z},\cdots,PC^m\vec{z},\cdots\}=P\{\vec{z},C\vec{z},\cdots,C^m\vec{z},\cdots\},\]
    where $C$ is obtained by modifying $J$ according to $\vec{z}$ as follows
    \[C=\begin{cases} 0 &{\rm if}\ |\lambda_s|\neq 1,\\ J'_{k_s}(\lambda_s)
&{\rm otherwise,}\end{cases}\] for each $1\leq s\leq l$, where $J'_{k_s}(\lambda_s)$ is obtained by replacing any element by zero, but the only element lies in the $(1,1)$ position.

Now, $C$ is the wanted diagonal matrix whose non-zero diagonal entries are absolution 1 and $Q=P$. That is, the set of limit point of $\mathcal{S}$ is the closure of $Q\{\vec{v},C\vec{v},\cdots,C^m\vec{v}\}$.

In order to see that, we only need to notice that for any $m\geq 1$, $C^m\vec{v}$ is a limit point of $\{\vec{v},C\vec{v},\cdots,C^m\vec{v}\}$, where $\vec{v}=\vec{z}$. Thus, the limit point of $\{C\vec{v},\cdots,C^m\vec{v}\}$ is the closure of itself.
\qed
\end{proof}

Combining Lemma~\ref{lem:uniatry} and Lemma~\ref{lem:limitpoint}, we give the parameter description of $S_L$ as follows. Firstly, we check whether $S_L=\varnothing$.
 If this is not the case, we deal with the orbit
$\{\vec{v},C\vec{v},\cdots,C^m\vec{v},\cdots\}$.  We consider the equivalence $\equiv$ over the non-zero eigenvalues of $C$: two eigenvalues $\lambda_1\equiv\lambda_2$  if $\lambda_1/\lambda_2$ is a root of unity. $\equiv$ gives rise to a partition $\{S_i\}$.
%
%divide the non-zero eigenvalues of $C$ into disjoint set $S_i$, due to the following rule, if the quotient of two eigenvalues is a root of unity, they belong to the same set, $i.e.$, if $\lambda_k/\lambda_s$ is a root of unity, they belong to the same set $S_i$.
Without loss of generality, we assume that
\begin{eqnarray*}
C=\left(
\begin{array}{cc}
D & 0  \\
 0 & 0 \\
\end{array}
\right),
\end{eqnarray*}
where $D=\oplus_{j} D_j$, with $D_j$ being a diagonal unitary matrix whose eigenvalues are in $S_j$.

There is an integer $N_j$ such that $D_j^{N_j}$ equals to $I$ up to some global phase. Thus, there is a universal $N$ such that $D_j^{N}$ equals to $I$ up to some global phase. Now we divide $\{\vec{v},C\vec{v},\cdots,C^m\vec{v},\cdots\}$ into $N$ orbits, the $k$-th orbit is \[\mathcal{S}(C^N,C^k\vec{v})=\{C^k\vec{v},C^{N+k}\vec{v},\cdots,C^{N+mk}\vec{v},\cdots\}.\]
If there is no $j$ such that the eigenvalues of $S_j$ are unit root, that is $D_j^l\neq I$ for any $l$,
according to Lemma 5, the set of limit points of $\mathcal{S}(C^N,C^k\vec{v})$ is
\[S_{L}^k=\{\oplus_{j} e^{i\alpha_j}D_j^{k}\cdot\vec{v}:\alpha_j\in \mathcal{R}\}.\]
If there is $j$ such that the eigenvalues of $S_j$ are unit root, says $S_1$,
according to Lemma 5, the set of limit points of $\mathcal{S}(C^N,C^k\vec{v})$ is
\[S_{L}^k=\{D_1^{k}\oplus_{j>1} e^{i\alpha_j}D_j^{k}\cdot\vec{v}:\alpha_j\in \mathcal{R}\}.\]
Thus, we can obtain \[S_L=\bigcup Q\cdot S_{L}^k,\] the union of finite parameter described boundary closed sets, a compact set.

\subsection{Approximate the distance between $\vec{y}$ and $S_L$}

In this subsection, we give a simple algorithm to approximate the distance between $\vec{y}$ and $S_L$.

For convenience, we assume that no there is no $j$ such that the eigenvalues of $S_j$ are unit root. And the idea can be directly used to approximate the distance for that case with $S_1$ contains unit root eigenvalues.

Firstly, we provide an algorithm which could well approximate the distance between $\vec{y}$ and $S_{L}^k=\{Q\oplus_{j=1}^h e^{i\alpha_j}D_j^{k}\cdot\vec{v}:\alpha_j\in \mathcal{R}\}$, that is
\[ d_k:=\inf f_k(\mu_1,\mu_2,\cdots,\mu_h),
\]
with restrictions that $|\mu_j|=1$ for any $j$, in other words, $d_k$ is the infimum of $f_k$ over the direct product of $h$ unit circle. The function $f_k$ is defined on the direct product of $h$ unit circle
\[f_k(\mu_1,\mu_2,\cdots,\mu_h)=||\vec{y}-Q\oplus \mu_kD_j^{k}\cdot\vec{v}||.\]
The function $f_k$ is a continuous function (actually it is a Lipschitz function) since the norm function $||\cdot||$ is continuous, and the domain is a compact set, the minimum value is achievable.

We will use epsilon-net to give a well approximation of $d_k$: For any given $\epsilon>0$, we can choose a set $X$ of finite number of points in the feasible set such that for any point of the domain, there is a point of $X_\epsilon$ such that the distance of this two points is less than $\epsilon$. We choose $d_k(X_\epsilon)$ as the minimal value of $f_k$ over the finite points set $X_\epsilon$ as a approximation of $d_k$, that is,
\[d_k(X_\epsilon):=\min_{(\nu_1,\nu_2,\cdots,\nu_h)\in X_\epsilon}f_k(\nu_1,\nu_2,\cdots,\nu_h).\]
Since the function $f_k$ is Lipschitz, we know that $d_k(X_\epsilon)$ is close to $d_k$. To get better approximation of $d_k$, we only need to choose smaller $\epsilon$..

For any $k$, we can have well approximation of $d_k$, thus, $d(X_\epsilon)=\min_{k}d_k(X_\epsilon)$ could be made sufficient close to $D(\vec{y},S_L)$. More precisely, according to the Lipschitz property, for any $\xi>0$, we can choose $\epsilon$ such that for any epsilon net $X_\epsilon$ of the direct product of $h$ unit circle, we have
\[d(X_\epsilon)<D(\vec{y},S_L)+\xi.\]

\subsection{Solve the Approximate Orbit Problem}
In this subsection, we attempt to solve the ``approximate orbit problem".
At the first step, we check whether $S_L$ is an empty set.
\begin{thm}
If $S_L=\varnothing$, the ``approximate orbit problem" is decidable.
\end{thm}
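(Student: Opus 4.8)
The plan is to turn the problem into a bounded search, using the linear divergence of the orbit supplied by Case~A of Lemma~\ref{lem:limitpoint}. First I would run the algorithm of that lemma on $\mathcal{S}(A,\vec{x})$; since $S_L=\varnothing$, it falls into Case~A and returns an integer $N$ together with a constant $c>0$ (which we may take to be a positive rational) such that $\|A^m\vec{x}\|>cm$ for every $m>N$. In particular $\|A^m\vec{x}\|\to\infty$, so intuitively the orbit can enter the fixed ball $B(\vec{y},\delta)$ only finitely often, and all we need is an explicit bound on how late that can happen.

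Next I would make this effective. Fix a rational $R\ge\|\vec{y}\|$ (computable from the rational vector $\vec{y}$ and the norm) and set
\[
M:=\max\Bigl\{\,N,\ \Bigl\lceil \tfrac{R+\delta}{c}\Bigr\rceil\,\Bigr\}.
\]
For any $m>M$ the triangle inequality gives $\|A^m\vec{x}-\vec{y}\|\ge\|A^m\vec{x}\|-\|\vec{y}\|>cm-R\ge\delta$, so $A^m\vec{x}\notin B(\vec{y},\delta)$. Hence the answer to the approximate orbit problem is ``yes'' if and only if $\|A^k\vec{x}-\vec{y}\|<\delta$ for some $k$ among the finitely many values $0,1,\dots,M$. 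Each such test is decidable: $A^k\vec{x}-\vec{y}$ is a rational vector obtained by repeated matrix--vector multiplication, and for the standard norms the strict inequality $\|A^k\vec{x}-\vec{y}\|<\delta$ with $\delta\in\mathbb{Q}$ reduces to a comparison of rationals (after squaring, in the Euclidean case). Enumerating $k=0,\dots,M$ and performing these tests therefore decides the problem.

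The only real subtlety — and the point at which the whole argument leans on earlier work — is that Case~A of Lemma~\ref{lem:limitpoint} must deliver $N$ and a genuinely computable rational lower bound $c>0$ for the growth rate, and that the given norm permits effective comparison of $\|v\|$ with a rational for rational $v$; granting these, the cutoff $M$ is explicit and the remainder is a routine finite search. There is no deeper obstacle here, which is exactly why the empty-$S_L$ case is the easy one: unlike the non-empty case, it never requires comparing $\delta$ with $D(\vec{y},S_L)$ or analysing the geometry of $S_L$ at all.
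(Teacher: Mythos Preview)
Your argument is correct and is essentially identical to the paper's own proof: both invoke Case~A of Lemma~\ref{lem:limitpoint} to obtain the linear lower bound $\|A^m\vec{x}\|>cm$ for $m>N$, use the triangle inequality to bound the index beyond which $A^m\vec{x}$ cannot lie in $B(\vec{y},\delta)$, and then reduce to a finite search. Your version is slightly more careful about the computability of the constants and the decidability of the individual norm comparisons, but there is no substantive difference in method.
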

\begin{proof} $S_L$ is empty, we can find $K$  and $c>0$ such that for $k>K$ \[||D(y,A^k \vec{x})||=||y-A^k\vec{x}||\geq ||A^k\vec{y}||-||y||> ck-||y||.\]
Thus, we only need to verify whether there is some $k\leq\max\{\frac{\epsilon+||y||}{c},K\}$ that whether $||D(y,A^k \vec{x})||\leq \epsilon$ is true in this case.
\qed
\end{proof}
\begin{thm}
If $S_L\neq \varnothing$, the ``approximate orbit problem" is decidable provided that $\epsilon\neq D(\vec{y},S_L)$.
\end{thm}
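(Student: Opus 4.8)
The plan is to combine the effective approximation of $D(\vec{y},S_L)$ developed above with a decomposition of the orbit into a bounded ``unitary'' part and a geometrically decaying part, and to split the argument according to which side of $D(\vec{y},S_L)$ the radius $\epsilon$ lies on. First I make this case split effective. By Lemma~\ref{lem:limitpoint} and the characterization of $S_L$ given above, $S_L=\bigcup_{0\le k<N}Q\cdot S_L^k$ is a finite union of compact sets, so $S_L$ is compact, the infimum $D(\vec{y},S_L)=\min_{0\le k<N}d_k$ is attained at some $\vec{p}\in S_L$, and each $f_k$ is Lipschitz with an effectively computable constant $L$. Running the epsilon-net approximation with mesh $\epsilon'=2^{-t}$ produces $d(X_{\epsilon'})$ with $D(\vec{y},S_L)\le d(X_{\epsilon'})\le D(\vec{y},S_L)+L\epsilon'$, hence the two-sided estimate $d(X_{\epsilon'})-L\epsilon'\le D(\vec{y},S_L)\le d(X_{\epsilon'})$. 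Because $\epsilon\neq D(\vec{y},S_L)$ by hypothesis, increasing $t$ will eventually certify one of the two mutually exclusive inequalities $d(X_{\epsilon'})<\epsilon$ (which forces $\epsilon>D(\vec{y},S_L)$) or $d(X_{\epsilon'})-L\epsilon'>\epsilon$ (which forces $\epsilon<D(\vec{y},S_L)$), so this loop halts, tells us which case we are in, and in the second case also yields a rational $\ell$ with $\epsilon<\ell\le D(\vec{y},S_L)$.

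If $\epsilon>D(\vec{y},S_L)$, the answer is ``yes''. With $\eta=\epsilon-||\vec{y}-\vec{p}||>0$, the definition of limit point applied to $\vec{p}$ gives (at least one, in fact infinitely many) $k$ with $||A^{k}\vec{x}-\vec{p}||<\eta$, and then $||A^{k}\vec{x}-\vec{y}||\le||A^{k}\vec{x}-\vec{p}||+||\vec{p}-\vec{y}||<\eta+(\epsilon-\eta)=\epsilon$.

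If $\epsilon<D(\vec{y},S_L)$, I bound the admissible exponents. Writing $A=PJP^{-1}$ and $\vec{z}=P^{-1}\vec{x}$ and decomposing $\vec{z}$ along the Jordan blocks, the fact that $S_L\neq\varnothing$ means (by the classification inside the proof of Lemma~\ref{lem:limitpoint}) that no block contributes an unbounded component, so $J^{m}\vec{z}=C^{m}\vec{z}+\vec{w}_m$ where $C$ is the matrix of that lemma and $\vec{w}_m$ gathers the blocks with $|\lambda_s|<1$. Hence $||\vec{w}_m||\le g(m)$ for an explicit $g(m)=\sum_s p_s(m)\beta_s^{m}\to 0$, where the $\beta_s\in[|\lambda_s|,1)$ are rationals obtained effectively from the characteristic polynomial of $A$; so for any rational $\rho>0$ a threshold $M_0$ with $||\vec{w}_m||<\rho$ for all $m>M_0$ can be computed. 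Since $A^{m}\vec{x}=PC^{m}\vec{z}+P\vec{w}_m$ and $PC^{m}\vec{z}\in S_L$ for $m\ge 1$ by Lemma~\ref{lem:limitpoint}, the triangle inequality yields $||A^{m}\vec{x}-\vec{y}||\ge D(\vec{y},S_L)-||P||\,||\vec{w}_m||$. Choosing $\rho$ with $||P||\rho<\ell-\epsilon$ and the corresponding $M_0$, we get $||A^{m}\vec{x}-\vec{y}||>\epsilon$ for all $m>M_0$; therefore $A^{m}\vec{x}\in B(\vec{y},\epsilon)$ can hold only for $m\in\{0,1,\dots,M_0\}$, and the problem reduces to testing these finitely many exponents, each test being a comparison of $||A^{m}\vec{x}-\vec{y}||$ with $\epsilon$.

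The main obstacle is the case $\epsilon<D(\vec{y},S_L)$: turning ``the decaying part tends to $0$'' into a genuinely computable exponent bound $M_0$ requires the geometric estimate $||\vec{w}_m||\le\sum_s p_s(m)\beta_s^{m}$ with the $\beta_s<1$ extracted effectively, and the whole comparison with $\epsilon$ relies on the infimum $D(\vec{y},S_L)$ being attained, which is exactly why the compactness of $S_L$ established above is needed. The hypothesis $\epsilon\neq D(\vec{y},S_L)$ is precisely what makes the initial case split terminate; at the critical value $\epsilon=D(\vec{y},S_L)$ the two-sided estimates never separate, and whether the orbit enters $B(\vec{y},\epsilon)$ then is the single case left open.
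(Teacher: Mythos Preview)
Your proof is correct and follows essentially the same route as the paper: an effective two-sided approximation of $D(\vec{y},S_L)$ (via epsilon-nets and the Lipschitz property, as in the paper) terminates the case split under the hypothesis $\epsilon\neq D(\vec{y},S_L)$; in the case $\epsilon>D(\vec{y},S_L)$ a limit point in the open ball forces infinitely many orbit hits, and in the case $\epsilon<D(\vec{y},S_L)$ the decomposition $A^{m}\vec{x}=PC^{m}\vec{z}+P\vec{w}_m$ with $PC^{m}\vec{z}\in S_L$ and $\|\vec{w}_m\|$ geometrically decaying yields a computable bound on the admissible exponents. The only cosmetic difference is that the paper phrases the second case as bounding $D(A^{k}\vec{x},S_L)$ between $\eta$ and $s\lambda^{k}$, which is exactly your triangle-inequality estimate rewritten.
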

\begin{proof} Firstly, we check which the following cases it is: $\epsilon>D(y,S_L)$ or $\epsilon<D(y,S_L)$ by making good enough approximation of $D(y,S_L)$.

According to the previous section, we can construct a decreasing series $x_j$ satisfying $0<x_j-D(y,S_L)<2^{-j}$ holds for any $j$. Now we have two new series: $a_j=x_j-\epsilon$ and $b_j=x_j-\epsilon-2^{-j}$ for $j\geq 0$. It is obviously that \[a_j\geq D(y,S_L)-\epsilon=d_j-\epsilon-(d_j-D(y,S_L))>d_j-\epsilon-2^{-j}=b_j.\]
Moreover, $a_j$ and $b_j$ share the limit $D(y,S_L)-\epsilon$. If $\epsilon>D(y,S_L)$, we know that there exists $j$ such that $a_j<0$. If $\epsilon<D(y,S_L)$, there exists $j$ such that $b_j>0$. The idea to distinguish these two cases is: Check whether $a_j<0$, if the answer is yes, $\epsilon>D(y,S_L)$; otherwise, check whether $b_j>0$, if the answer is yes, $\epsilon<D(y,S_L)$. When the two answers are both no, Check the same thing for $j+1$. We can always distinguish the two cases, provided $\epsilon\neq D(y,S_L)$,
We summarize this algorithm below.
\smallskip\
\begin{algorithm}
\caption{Bound $\epsilon-D(y,S_L)$ \label{alg:AT}}
\SetKwInOut{Input}{input}\SetKwInOut{Output}{output}
\Input{integer $n$}
\Output{nonzero $\eta$. (a number between 0 and $\epsilon-D(y,S_L)$)}
\textbf{real number} $b\leftarrow 0$\;
\textbf{integer} $j\leftarrow 1$\;
\While{$j\neq 0$}{
Construct $x_j$ satisfying $0<x_j-D(y,S_L)<2^{-j}$\;
\textbf{real number} $a_j\leftarrow x_j-\epsilon$\;
\textbf{real number} $b_j\leftarrow x_j-\epsilon--2^{-j}$\;
\If{$a_j<0$}{
$j\leftarrow 0$\;
$\eta\leftarrow a_j$\;
}
\If{$b_j>0$}{
$j\leftarrow 0$\;
$\eta\leftarrow b_j$\;
}
\Else
{
$j\leftarrow j+1$\;
}
}
\Return $s$
\end{algorithm}
 \begin{itemize}
\item Case 1: $\epsilon>D(y,S_L)$. There exist infinite many $k$ such that $D(A^k\vec{x},y)<\epsilon$. The reason is that the open ball $B(y,\epsilon)$ contains one limit point of the orbit, it must contains infinite many point of the original orbit.

\item Case 2: $\epsilon<D(y,S_L)$. There is a constant lower bound of $D(y,S_L)-\epsilon$ by obtain $b_j=\eta>0$ for some $j$.

Note that $S_L$ is a compact set, we know that for any $k$, we can choose $\vec{q}\in S_L$ such that $D(A^k\vec{x},\vec{q})=D(A^k\vec{x},S_L)$, then
\[D(\vec{y}, A^k\vec{x})\geq D(\vec{y},\vec{q})-D(A^k\vec{x},\vec{q})\geq D(y,S_L)-D(A^k\vec{x},S_L).\]

If $k$ is some integer that we want, $i,e.,$ $D(\vec{y}, A^k\vec{x})\leq \epsilon$, we know that
\[D(A^k\vec{x},S_L)\geq D(y,S_L)-D(y, A^k\vec{x})>D(y,S_L)-\epsilon>\eta>0.\]
According to the proof of Lemma \ref{lem:limitpoint}, we can conclude that there exists a nonsingular $Q$ and a vector $\vec{r}$ such that
$A^k\vec{x}=QG^k\vec{r}$ and $\vec{q}=QC\vec{r}$ holds for any $\vec{q}\in S_L$, where
\begin{eqnarray*}
G=\left(
\begin{array}{cc}
V & 0  \\
 0 & R \\
\end{array}
\right), and~~C=\left(
\begin{array}{cc}
D & 0  \\
 0 & 0 \\
\end{array}
\right),
\end{eqnarray*}
with $D$ and $V$ being diagonal unitary matrices and $R=\oplus_{j} J_{k_j}(\lambda_j)$ with $|\lambda_j|<1$.
Due to Lemma \ref{lem:uniatry} and Lemma \ref{lem:limitpoint}, we can choose $D=V^k$.
Thus,
\[D(A^k\vec{x},S_L)\leq D(A^k\vec{x},\vec{q})=||Q (G^k-C)\vec{r}||\leq s||(G^k-C)\vec{r}||= s||R^k\vec{r'}||,\]
where $s=||Q||>0$ and $\vec{r'}$ is some vector in the smaller dimensional space.
Notice that there exists $0<\lambda<1$ and integer $K$ such that for $k>K$, we have
\[||R^k\vec{r'}||<\lambda^k.\]
Therefore,
\[\eta<D(A^k\vec{x},S_L)<s\lambda^k\Rightarrow k\leq -\log_{\lambda}\frac{\eta}{s}.\]
The rest is to check whether $D(\vec{y}, A^k\vec{x})\leq \epsilon$ holds for $k\leq \max\{-\log_{\lambda}\frac{\eta}{s},K\}$.
 \end{itemize}
 The proof of this theorem is completed.
\qed
\end{proof}
\section{Conclusion}
In this paper, we revisit the orbit problem and study the decidability of so called ``approximate orbit problem". We first provide a complete characterization of the limit points of the given orbit, then we demonstrate a method which gives a good approximation of the distance between the target vector and the set of limit points. The assumption that the radius does not equal to the distance between the target vector and the set of limit points plays a central role in our arguments. If this assumption is valid, the ``approximate orbit problem" can be solved. Without this assumption, this problem seems quite hard as the distance (precise value) is difficult to obtain even the given vector norm is $l^2$ norm.

By removing the assumption with another one, we have an interesting problem: For a given rational orbit $\mathcal{S}(A, \vec{x})$ with $S_L$ being the set of limit points a rational target vector $\vec{y}$, whether there is $k\in\mathbb{N}$ such that $||A^k \vec{x}-\vec{y}||<D(S_L,\vec{y})$.  That is the radius equals to $D(S_L,\vec{y})$, the distance between the target vector and the set of limit points.

\paragraph{Acknowledgement.} We thank S. Ying for his careful reading of the previous version of this paper. N. Yu was
indebted to Professor M. Ying for his constant support during this project. This work was partly supported by the ERC Advanced Grant VERIWARE and the National Natural Science Foundation of China (Grant Nos. 61179030 and 60621062).

\bibliographystyle{splncs03}

\begin{thebibliography}{99}
\bibitem{HAR69} M. Harrison. Lectures on Sequential Machines. Academic Press, Orlando, Fla., 1969.

\bibitem{SHA79} H. S. Shank. The rational case of a matrix problem of Harrison. Discrete Math. 28 (1979),
207-212.

\bibitem{KL86} R. Kannan and R. Lipton. The orbit problem is decidable. In
Proceedings of the 12th Annual ACM Symposium on Theory of Computing, 1980, pp. 252-261. See also ¡°Polynomial-time
algorithms for the orbit problem¡±, Journal of the ACM, Vol 33, No.
4, 1986, 808-821.

\bibitem{ELS59} B. Elspas. The theory of autonomous linear sequential networks. IRE Trans. on Circuit Theory, vol. CT-6, pp. 45-60, March 1959.

\bibitem{GIL64} A. Gill. Analysis of linear sequential circuits by confluence sets. IEEE Trans. on Electronic Computers, vol. EC-13, pp. 226-231, June 1964.

\bibitem{ADE79} L. Adelman A subexponential algorithm for the discrete logarithm problem. In Proceedings of
the 20th IEEE Symposium on the Foundations oJComputer Science. IEEE, New York, 1979,
pp. 55-60.

\bibitem{CLZ94} J. Cai, R. J. Lipton and Y. Zalcsteint. The Complexity of the Membership Problem for
2-generated Commutative Semigroups of Rational
Matrices. In Proceedings of the 35th Annual Symposium on Foundations of Computer Science, 1994, pp. 135-142.

\bibitem{CAI94} J. Cai. Computing Jordan Normal Forms exactly for
commuting matrices in polynomial time. SUNY Buffalo
CS Tech Report 94-16.

\end{thebibliography}

\end{document}